\documentclass{article}



     \usepackage[preprint]{neurips_2024}


\usepackage{natbib}

\usepackage[utf8]{inputenc} 
\usepackage[T1]{fontenc}    
\usepackage{hyperref}       
\usepackage{url}            
\usepackage{booktabs}       
\usepackage{amsfonts, amsmath, amssymb, bbm, amsthm}       
\usepackage{nicefrac}       
\usepackage{microtype}      
\usepackage{xcolor}         
\definecolor{auburnblue}{RGB}{12, 35, 64}
\definecolor{auburnorange}{RGB}{232, 119, 34}

\usepackage{caption}
\usepackage{subcaption}
\usepackage{float} 
\usepackage{pgfplots}
\pgfplotsset{compat=1.17}
\usepgfplotslibrary{groupplots}
\usepackage{pgfplotstable}
\usepackage{filecontents}
\usepackage{tikz}
\usepackage{algorithm}
\usepackage{algpseudocode}
\usepackage{tabularx} 
\usepackage{bm}
\usepackage{todonotes}  

\newtheorem{theorem}{Theorem}
\newtheorem{prop}{Proposition}

\newtheorem{remk}{Remark}

\newtheorem{assm}{Assumption}
\newcommand{\argzero}{\mbox{argzero}}
\newcommand{\argmin}{\mbox{argmin}}

\title{Fiducial Matching:\\ Differentially Private Inference for Categorical Data}

%

\author{%
  Ogonnaya Michael Romanus\\ 
  Auburn University\\
  Auburn, AL 36849 \\
  \texttt{omr0010@auburn.edu} \\
   \And
  Younes Boulaguiem\\
  University of Geneva \\
  Geneva 1205, Switzerland \\
  \texttt{younes.boulaguiem@unige.ch} \\
  \AND
  Roberto Molinari \\
  Auburn University \\
  Auburn, AL 36849 \\
  \texttt{robmolinari@auburn.edu} \\
}

\begin{document}

\maketitle

\begin{abstract}
    The task of statistical inference, which includes the building of confidence intervals and tests for parameters and effects of interest to a researcher, is still an open area of investigation in a differentially private (DP) setting. Indeed, in addition to the randomness due to data sampling, DP delivers another source of randomness consisting of the noise added to protect an individual's data from being disclosed to a potential attacker. As a result of this convolution of noises, in many cases it is too complicated to determine the stochastic behavior of the statistics and parameters resulting from a DP procedure. In this work, we contribute to this line of investigation by employing a simulation-based matching approach, solved through tools from the fiducial framework, which aims to replicate the data generation pipeline (including the DP step) and retrieve an approximate distribution of the estimates resulting from this pipeline. For this purpose, we focus on the analysis of categorical (nominal) data that is common in national surveys, for which sensitivity is naturally defined, and on additive privacy mechanisms. We prove the validity of the proposed approach in terms of coverage and highlight its good computational and statistical performance for different inferential tasks in simulated and applied data settings.
\end{abstract}

\section{Introduction}

Differential Privacy (DP), along with its various adaptations in recent years, aims to safeguard individual privacy by providing a mathematically rigorous guarantee of protection \cite{c20}. This guarantee is achieved in practice by injecting probabilistic (random) noise into different stages of the data processing and analysis pipeline. The privacy mechanisms implementing DP can act directly on the raw data itself or on the outputs derived from the data, such as summary statistics, model parameters, or query results. By quantifying the privacy risk in the presence of arbitrary external information, DP ensures strong protection against re-identification attacks. However, this privacy-preserving approach comes at a cost: the introduction of noise inevitably affects the utility of the data output. The extent of this impact varies depending on the analysis task and the specific privacy mechanism used, but generally, DP induces statistical bias, increases variability, and complicates statistical inference by affecting key tasks such as hypothesis testing and uncertainty quantification \cite{c6, c22}. 

A fundamental characteristic of DP is that it implies that the data is affected by two distinct sources of randomness: the inherent randomness from sampling of the original dataset and the additional randomness imposed by the privacy mechanism. This dual source of randomness poses significant challenges for statistical inference. Once noise is introduced to satisfy privacy constraints, the original data become inaccessible to the analyst, meaning that standard inferential procedures based on classical likelihood theory may no longer be applicable. Moreover, the sampling distribution of privatized data can become highly complex and difficult to characterize analytically, particularly when noise is introduced in a non-trivial manner \citep{williams2010probabilistic}. As a result, standard statistical methods may not account for the distortions introduced by DP, leading to biased estimators, inaccurate confidence intervals, and unreliable hypothesis testing results.

To illustrate these challenges, consider a setting in which we have a probabilistic model \( f_{\bm{\theta}}\) that describes the data-generating process for the sensitive dataset \( \mathcal{D} \), where \( \bm{\theta} \in \Theta \subset \mathbb{R}^p \) represents the parameter of interest. Furthermore, suppose that a privacy mechanism \( m(s|\mathcal{D} ) \) is applied to privatize the data, producing the observed privatized summary \( s \). A natural approach to make an inference on \( \bm{\theta} \) in this setting would be to compute the marginal likelihood of \( \bm{\theta} \) given \( s \). However, computing this marginal likelihood is often infeasible in practice, as it requires integrating over all possible realizations of the original dataset \( \mathcal{D}  \) \cite{c21}. This intractability increases when dealing with high-dimensional data and with the complexity of the privacy mechanism, particularly when the noise is non-Gaussian or correlated. Consequently, conventional likelihood-based inference methods may be impractical within the DP framework. 

Given these limitations, inference methods that bypass the computation of the likelihood have been increasingly investigated for DP-constrained statistical analysis. Among these, simulation-based inference techniques, such as parametric bootstrap and approximate Bayesian computation, have been widely explored to handle uncertainty quantification in DP settings \cite{c26,c23,c24,c25}. These methods leverage the ability to sample from the generative model \( f_{\bm{\theta}}(\mathcal{D} ) \) and the privacy mechanism \( m(s|\mathcal{D} ) \) to approximate posterior distributions or construct valid confidence intervals without requiring explicit likelihood evaluation. Moreover, simulation-based inference methods align naturally with the DP framework, since the privacy mechanism \( m(s|\mathcal{D} ) \) can be made publicly available without compromising individual privacy. This transparency allows for the development of robust inference procedures while ensuring that privacy constraints are rigorously maintained. 

However, these methods also introduce computational burdens as they typically require extensive sampling to achieve accurate approximations. With this in mind, also in this work we approach the problem of statistical inference on privatized categorical data through a different paradigm than the likelihood-based one, and notably we use the simulated moment-matching (extremum estimator) framework \cite[see e.g.][]{c13}. In fact, we make use of the approach proposed in \cite{orso2024accurate} which is based on the statistical properties of moment-matching simulation-based approaches that were studied and refined in \cite{c14}, \cite{c15} and \cite{c16}. In this work, we therefore adopt this moment-matching approach to deliver valid DP confidence intervals (and hypothesis tests) for parameters of interest. In particular, given the discrete nature of the matching solution, we use a fiducial argument to solve the optimization problem \citep{c25}. In contrast to existing methods which are specific to certain types of statistical tasks, this proposed approach can be applied to a wider variety of applications on categorical data, such as, among others, two sample tests for proportions, chi-squared tests or (saturated) logistic models with categorical predictors \citep[see e.g.][]{agresti2013categorical}.

\section{Related Work}

Over the past decade, significant research has focused on statistical inference within the framework of DP. For example, the Bayesian framework offers a highly flexible approach for analyzing privatized data. \citet{williams2010probabilistic} were the first to highlight that, as mentioned earlier, the marginal likelihood is computationally intractable. To address this challenge, various data augmentation MCMC methods have been introduced. For example, \citet{bernstein2018differentially,bernstein2019differentially} demonstrated that when the distribution of a sufficient statistic is approximated as a normal distribution, a simple Gibbs sampler can be used to estimate the posterior distribution. Meanwhile, \citet{ju2022data} developed a Metropolis-within-Gibbs sampler that directly targets the exact private posterior distribution, although its effectiveness is limited unless conjugate priors and small sample sizes are used. For simpler cases, \citet{gong2019exact} introduced an independent and identically distributed (i.i.d.) sampling method based on rejection sampling. Alternatively, \citet{karwa2015private} proposed a variational approximation specifically tailored for a naive Bayes classifier.

Beyond Bayesian methods, frequentist approaches have also been explored for statistical inference on privatized data. \citet{wang2018statistical} argued that conventional asymptotic approximations often perform poorly in finite samples when applied to privatized data. To address this, they proposed a modified asymptotic framework in which only the sufficient statistics are subject to asymptotic approximations, while the noise introduced by the privacy mechanism remains unaltered. Another widely used frequentist technique involves the parametric bootstrap, which approximates the sampling distribution based on privatized data \citep{ferrando2022parametric,alabi2022hypothesis}. However, \citet{awan2024simulation} demonstrated that the parametric bootstrap can lead to biased inference, resulting in inadequate coverage and incorrect type I error rates. As a more reliable alternative, they introduced a simulation-based inference approach that ensures conservative control over coverage and type I errors, although at the cost of increased computational complexity.

Specifically, discussing categorical (nominal) data, \citet{c6} were among the first to discuss statistical hypothesis testing within a DP framework in these settings. In particular, using the Laplace DP mechanism, they studied the performance of standard statistical tests (e.g., tests for proportions and independence in $2\times2$ tables) under the constraint of DP showing how these are highly affected by DP mechanisms and that large sample sizes are needed to achieve reasonable performance. Additionally, they utilized normal approximation to the sampling distribution to calibrate the type I error. Building on this work, \cite{c5} expanded these techniques to multinomial data, employing Monte Carlo methods to control and estimate the type I error. In contrast, \cite{C7} formulated asymptotic distributions for their tests and validated the reliability of the type I error through simulations. \cite{c3} devised private chi-square tests for goodness-of-fit and identity problems, ensuring that the asymptotic distributions of the test statistics, after accounting for added noise for privacy, align with those of classical chi-square tests. Furthermore, \cite{c8} developed DP p-values for chi-squared tests of GWAS data and derived the exact sampling distribution of the noisy statistics. 

To our knowledge, the first formal inference framework for categorical data was the uniformly most powerful (UMP) hypothesis test under DP for binomial data proposed in \cite{c9}. Their testing procedure relies on the Tulap distribution, which is a combination of discrete Laplace and uniform distributions. They also present an algorithm for generating a $p$-value to access one-sided hypotheses, demonstrating that the resulting $p$-value is the smallest $\epsilon$-DP $p$-value for this test. Simultaneously, \cite{c11} and \cite{c12} introduced a similar framework for constructing a differentially private version of any non-private hypothesis test in a black-box manner, employing the Subsample-and-Aggregate technique proposed by \cite{c10}. Despite both methods starting with subsampling, the aggregation strategies differ. \cite{c11} used the algorithm of \cite{c9} as the aggregation function, while \cite{c12} designed a custom binomial test based on a randomized response-type method.

\section{Preliminaries}

Let $X \sim f_{\bm{\theta}}^K$, where $f_{\bm{\theta}}^K$ is a multinomial distribution with $K$ categories (classes) and, as mentioned earlier, $\bm{\theta}$ is the parameter of interest that characterizes the distribution $f$. Typically $\bm{\theta}~\in~\Delta^{K-1}~\subset~\mathbb{R}^{K}$ is a vector defining the probabilities for each of the $K$ classes and belonging to the $(K-1)$-dimensional simplex:
\[
\Theta := \left\{ \bm{\theta} \in \mathbb{R}_+^{K} : \sum_{k=1}^{K} \theta_k = 1, \theta_k \geq 0 \,\, \text{for} \,\, k = 1, \hdots, K \right\}.
\]
More specifically, we have that $\theta_k = \mathbb{P}(X = k)$ for $k = 1, \hdots, K$, which can also be expressed through other parameterizations based on known deterministic transformations of $\bm{\theta}$  (for example, log-odds in logistic models). Based on this, letting $\mathcal{D} := \{x_1, \hdots, x_n\}$ represent the data with $n$ samples, the maximum-likelihood (best unbiased) estimator for $\theta_k$ is given by the sample proportion:
\begin{equation}
\label{eq:sample_prop}
    \bar{\theta}_k = \frac{1}{n} \sum_{i = 1}^n \mathbbm{1}(X_i = k),
\end{equation}
which therefore represents the estimated probability of $X$ falling into class $k$. Without loss of generality, for the following sections, let us consider the case of $K = 2$ such that $X \in \{0, 1\}$ which implies that $f_{\theta}$ is a Binomial distribution with parameter $\theta$ corresponding to the probability of success (i.e. $\theta = \mathbb{P}(X = 1)$) with estimator $\bar{\theta}$. We will then extend our study to the case $K > 2$ further on.

Once we have the estimator $\bar{\theta}$, a common task is to build Confidence Intervals (CIs) for the true parameter, say $\theta_0$, or to run some hypothesis tests on the true parameter using this estimator. To do so, the sampling distribution of the estimator is required, and, in the case of the sample proportion, there are exact as well as asymptotic distributions that are commonly employed to achieve appropriate coverage and type I errors. However, under the constraint of pure $\epsilon$-DP we cannot release the observed sample proportion, and a common mechanism used to guarantee  $\epsilon$-DP in these settings is to release the statistic delivered from the following additive mechanism:
\begin{equation}
\label{eq.dp_prop}
    \hat{\pi} := \bar{\theta} + \frac{\Delta_{\bar{\theta}}}{\epsilon} Z,
\end{equation}
where $Z$ is a zero-mean and unit variance random variable independent of the data; $\epsilon$ is the privacy budget and $\Delta_{\bar{\theta}}=\sup_{\mathcal{D},\mathcal{D}'} \lVert \bar{\theta}(\mathcal{D})-\bar{\theta}(\mathcal{D}')\rVert$ is the sensitivity of the sample proportion $\bar{\theta}$ with respect to a norm $\lVert\cdot\rVert$ computed on $n$ samples \citep[see e.g.][]{dwork2006differential, dwork2006calibrating}. More in detail, $\mathcal{D}$ and $\mathcal{D}'$ are two neighboring databases of size $n$ (i.e. databases that differ only by one entry) and $\bar{\theta}(\mathcal{D})$ represents the sample proportion computed on database $\mathcal{D}$. Hence, the sensitivity $\Delta_{\bar{\theta}}$ measures the maximum amount by which the sample proportion can change by arbitrarily changing one observation from the database, which in the case of the sample proportion is notably $\Delta_{\bar{\theta}} = \nicefrac{1}{n}$. On this note, in this work we only consider \textit{additive} DP mechanisms of the form in \eqref{eq.dp_prop} and assume that there are no privacy constraints on the data size $n$, as is the case for the US Census and other surveys, for example. To satisfy $\epsilon$-DP, the $L_1$ norm is used and $Z$ is a vector with entries coming from $\text{Laplace}(0,1)$. To satisfy $\epsilon$-Gaussian Differential Privacy (GDP) the $L_2$ norm is used and $Z\sim N(0,1)$.  As a side note, it was shown in \cite{geng2015optimal} that adding discrete Laplace noise is optimal under $\epsilon$-DP; however, in this work we want to allow for more general definitions of DP. Hereinafter, for notational simplicity, we denote the privacy noise as $Y:= \nicefrac{\Delta_{\bar{\theta}} Z}{\epsilon}$ (for any appropriate choice of $Z$). Based on this, we have our first assumption for our approach proposed later in Section \ref{sec.fima}.\\

\begin{assm}
\label{asm.additive}
    The privacy mechanism to ensure DP is an additive noise mechanism as in \eqref{eq.dp_prop}.
\end{assm}

In this work we will use notation that aligns with the definition of $\epsilon$-DP. However, as long as the mechanism remains an additive one, any form of DP can be addressed by the proposed methodology described further on. This being said, with the addition of the privacy noise it is straightforward to see that $\hat{\pi} = \bar{\theta} + Y$ is an unbiased and consistent estimator for the true parameter of interest $\theta_0$. However, it may not be straightforward to derive or approximate the non-asymptotic sampling distribution of $\hat{\pi}$ given the convolution of two sources of randomness (sampling and privacy noise). If using discrete Laplace noise for the one-sample proportion setting, then \cite{c9} derived the ``Tulap'' distribution which allowed them to deliver the UMP test for binomial data (along with CIs and similar inferential tools). Under other privacy mechanisms, alternative solutions can be conceived, but they have different limitations, as highlighted earlier. For example, the parametric bootstrap \cite{efron2012bayesian} could generally be a potential solution but it would not work if $\hat{\pi} \notin [0, 1]$, which is indeed possible for different choices of $Y$, and it may not deliver the correct distribution since this depends on the point in which $\hat{\pi}$ lands on the interval $[0, 1]$.

To overcome the above issues, we consider an alternative simulation-based framework proposed in \cite{orso2024accurate} which is inspired by indirect inference \cite{c13} and akin to the repro sampling technique in \cite{xie2022repro}. More specifically, we define a new DP estimator based on a matching criterion and use tools from the fiducial inference literature to deliver solutions for it \citep{c27}. Under certain regularity conditions, the bootstrapped solutions for this estimator are valid when employing the percentile approach.

\section{Methodology}

To describe the proposed approach taken from \cite{orso2024accurate}, we first abuse notation slightly by defining the categorical variable as $X := X(\theta_0)$ to underline its implicit dependence on the true parameter $\theta_0$ through the distribution $f_{\theta_0}$. This being said, our goal is to construct valid CIs for $\theta_0 \in [0,1]$ knowing that the initial estimator $\hat{\pi}$, computed on $X(\theta_0)$, converges in probability to the non-stochastic limit $\theta_0$. To build such intervals, let us express the data-generating mechanism as $X(\theta) = g(\theta, W)$, where $W \in \mathbb{R}^m$ is a random vector whose distribution does not depend on $\theta$. For example, $W$ can be seen as a vector of iid standard uniform random variables such as those used to simulate samples from parametric distributions through their inverse cumulative distribution functions. Following this expression, we can also express \textit{simulated} data as $X^*(\theta) = g(\theta, W^*)$ where $W^*$ represents an independent copy of $W$. As a consequence, $\hat{\pi}$ represents the DP estimator computed on the observed data $X(\theta_0, w)$, where $w$ is fixed and unknown for the observed data, while we let $\hat{\pi}^*(\theta)$ represent the DP estimator computed on the simulated data $X^*(\theta, w^*)$, where $w^*$ is a sample of the random variable $W^*$. 

Based on the setup of the above problem, a first possible framework that one may consider to find a distribution with respect to the parameter $\theta$ would be that of (generalized) Fiducial inference \cite{c27}. The latter aims to find a distribution by matching the observed data with simulated data or, under certain conditions, by conditioning on sufficient statistics for $\theta$. However, ignoring ongoing conceptual debates regarding this framework, in our DP setting we do not have access to the data, and it is not clear if the DP estimator $\hat{\pi}$ qualifies as a sufficient statistic once privacy noise has been added (especially if $\hat{\pi} \notin (0,1)$). We therefore consider an alternative framework which defines an estimator for the quantity of interest starting from auxiliary estimators (including biased and/or inconsistent estimators for $\theta_0$) and employs tools from the fiducial literature to deliver solutions for this estimator and valid (frequentist) statistical coverage of $\theta_0$. Indeed, this framework employs general statistics related to the parameter $\theta_0$ and the proposed estimator for this parameter is given by the solution to the optimization problem:
\begin{equation}
\label{eq:ib}
    \hat{\theta} \in \underset{\theta \in [0,1]}{\argmin} \, \| \hat{\pi} - \hat{\pi}^*(\theta) \|,
\end{equation}
where $\| \cdot \|$ denotes the Euclidean norm. With this definition, recalling our previous notation, it is clear that if $w^* = w$, then $\theta_0$ is guaranteed to be a solution of \eqref{eq:ib}. In practice, the distribution of $\hat{\theta}$ is approximated by generating $H$ independent realizations of the random variable $W^*$, with $H$ generally being large, which simply amounts to simulating $H$ datasets of size $n$ from the distribution $f_{\theta}$. For each of these, the DP estimator $\hat{\pi}^*(\theta)$ is computed by also generating new realizations of the privacy noise $Y^*$ (whose parameters are known), while $\hat{\pi}$ is fixed from the observed data for all $h = 1, \hdots, H$. Based on each of these replicates, the matching problem in \eqref{eq:ib} is solved $H$ times thereby generating $H$ samples of the random variable $\hat{\theta}$, that is, $\{\hat{\theta}_h\}_{h = 1, \hdots, H}$, on which the percentile method can be used to obtain one- or two-sided CIs \citep[more details in][]{orso2024accurate}.

However, two major issues remain in obtaining this distribution: (i) the matching problem in \eqref{eq:ib} may not be computationally efficient to solve and (ii) it is not clear how to sample $\hat{\theta}$ given the discrete nature of the data $\mathcal{D}$. For this purpose, by reformulating the matching problem in \eqref{eq:ib}, we present a solution which relies on tools from the fiducial framework.

\subsection{FIMA: Fiducial Matching}
\label{sec.fima}

Let us start by rewriting the matching problem in \eqref{eq:ib} as an ``$\argzero$'' problem, i.e.
$$\hat{\theta} \in \underset{\theta \in [0,1]}{\argzero} \,[\hat{\pi} - \hat{\pi}^*(\theta)].$$
Essentially, this implies that we want to find the value of $\theta$ such that
\begin{equation}
    \label{eq.match_arg}
    \hat{\pi} = \hat{\pi}^*(\theta).
\end{equation}
We can expand the second term of this matching as follows: 
$$\hat{\pi}^*(\theta) := \bar{\theta}^*(\theta) + Y^*,$$
where $\bar{\theta}^*(\theta)$ is the simulated sample proportion implied by $\theta$ and $Y^*$ is an independent copy of the privacy noise $Y$ (by definition independent of the distribution $f_{\theta}$). This allows us to rewrite \eqref{eq.match_arg} as follows:
\begin{equation}
\label{eq:alt_ib}
    \hat{\pi} = \bar{\theta}^*(\theta) + Y^* \implies \bar{\theta}^*(\theta) = \hat{\pi} - Y^*.
\end{equation}
Let us now focus on the second equation in \eqref{eq:alt_ib} where it can be noted that the simulated sample proportion $\bar{\theta}^*(\theta)$ can also be expressed as
$$\bar{\theta}^*(\theta) := \frac{1}{n}\sum_{i=1}^n \mathbbm{1}(U_i^* \leq \theta),$$
with $U_i^* \overset{iid}{\sim} U(0,1)$ being an iid standard uniform random variable, for $i = 1, \hdots, n$ \citep{c27}. Hence, $\bar{\theta}^*(\theta)$ corresponds to an empirical cumulative distribution function (CDF) based on the sample $U_1^*, \dots, U_n^*$, denoted as $F_{U^*}(\theta)$. As a result, the second equality in \eqref{eq:alt_ib} can be re-expressed as
$$F_{U^*}(\theta) = \hat{\pi} - Y^*,$$
such that, defining $\tilde{\theta}^* := \hat{\pi} - Y^*$, we have that the solution to \eqref{eq.match_arg} is given by
\begin{equation}
\label{eq.ib_sol}
    \hat{\theta} = F_{U^*}^{-1}(\tilde{\theta}^*).
\end{equation}
Therefore, to obtain this solution we need to inverse the empirical distribution function $F_U^*$ which is discrete by nature, implying that we do not obtain point solutions but intervals. Indeed, the interval solutions are given by:
\begin{equation*}
\label{eq:int_sol}
    I(\hat{\pi}, Y^*, \{U_i^*\}) =
\begin{cases}
    \delta & \text{if }\, \tilde{\theta}^*\leq0\\
    [\delta, U_{[1]}^*) & \text{if }\, 0 \leq \tilde{\theta}^*< \frac{1}{n}\\
    [U_{[\lfloor n\tilde{\theta}^* \rfloor]}^*, U_{[\lfloor n\tilde{\theta}^* +1 \rfloor]}^*) & \text{if }\, \frac{1}{n} \leq \tilde{\theta}^*<1-\frac{1}{n}\\
    [U_{[n]}^*, 1 - \delta) & \text{if }\, 1-\frac{1}{n} \leq \tilde{\theta}^* < 1\\
    1 - \delta & \text{if } \, \tilde{\theta}^*\geq 1
\end{cases}\,\,\,\,\,,
\end{equation*}
where $U_{[i]}^*$ represents the $i^{\text{th}}$ ordered value of the sequence of standard uniform variables $\{U_i^*\}~\in~[0,1]^n$, $\lfloor v\rfloor$ represents the largest integer smaller than or equal to $v$ and $0 < \delta \ll 1$ is a small constant close to zero (for example, the machine epsilon). As a result, sampling a value from the interval solution $I(\hat{\pi}, Y^*, \{U_i^*\})$ consists in a solution to the matching problem in \eqref{eq:ib}. Indeed, following \cite{c27}, we know that a \textit{generalized fiducial quantity} for this problem is given by:
\begin{equation*}
    \hat{\theta} = U_{[\lfloor n\tilde{\theta}^* \rfloor]}^* + D\,\left(U_{[\lfloor n\tilde{\theta}^* \rfloor]}^* - U_{[\lfloor n\tilde{\theta}^* +1 \rfloor]}^*\right),
\end{equation*}
where $D$ is a random variable on the support $[0, 1]$. Based on \cite{c27}, any choice of distribution for $D$ (under the stated support constraint) provides a fiducial quantity consisting in a solution to \eqref{eq.ib_sol}. The final procedure, justified on the basis of the above steps, is presented in Algorithm \ref{algo_ifb} and we refer to it as Fiducial Matching (FIMA).

\vspace{0.2cm} \noindent 
\begin{algorithm}[H]
\caption{Fiducial Matching (FIMA)}
\label{algo_ifb}
\begin{algorithmic}[1]
    \State \textbf{Input:} $\hat{\pi}$: DP proportion/probability; $n$: data size; $\epsilon$: privacy budget; $H$: number of solutions; $q$: distribution for $Y$; $g$: distribution for $D$; $\delta > 0$: small quantity close to zero.
    \For{$h = 1, \hdots, H$}
    \State $\tilde{\theta}_h^* = \hat{\pi} - Y_h^*$, where $Y_h^* \sim q\left(\frac{1}{n \epsilon}\right)$.
    \If{$\tilde{\theta}_h^* > 1$}
        \State \textbf{Return} $\hat{\theta}_h = 1 - \delta$
    \ElsIf{$\tilde{\theta}_h^* < 0$}
        \State \textbf{Return} $\hat{\theta}_h = \delta$
    \Else
        \State Generate $\{U_{h,i}^*\}$, for $i=, 1, \hdots, n$ and $U_{h,i}^* \overset{iid}{\sim} U(0,1)$
        \State \textbf{Return} $\hat{\theta}_h = U_{[\lfloor n\tilde{\theta}_h^* \rfloor]}^* + D_h\,(U_{[\lfloor n\tilde{\theta}_h^* \rfloor]}^* - U_{[\lfloor n\tilde{\theta}_h^* +1 \rfloor]}^*)$, where $D_h \sim g$
    \EndIf
    \EndFor
    \State \textbf{Output:} A sequence $\{\hat{\theta}_h\}$ for $h = 1, \hdots, H$
\end{algorithmic}
\end{algorithm}

The FIMA described in Algorithm \ref{algo_ifb} is a logical representation of the procedure but can of course be made computationally more efficient by avoiding the loop and generating all random values at the first step (i.e. $Y^*$, $\{U_i^*\}$ and $D$ for all $n$ and $H$). Thereafter, all that is needed is to perform indexing on predefined objects since $\hat{\pi}$ is fixed, and consequently $\{\tilde{\theta}_h^*\}$ can be computed in one vectorized operation. Moreover, in this work we will make the choice of $D \sim \text{Beta}(\nicefrac{1}{2}, \nicefrac{1}{2})$ which directly delivers the FIMA solution:
$$\hat{\theta} \sim \text{Beta}(n\tilde{\theta}^* +\nicefrac{1}{2}, n-n\tilde{\theta}^*+\nicefrac{1}{2}),$$
consisting in a posterior distribution based on Jeffrey's prior \citep[see again][]{c27}. Hence, using this solution, there is no need to generate the realizations $D$ or the sequences $\{U_h^*\}$, thus also avoiding the ordering operation on the latter.

\begin{remk}
    An alternative solution to sample a point estimate $\hat{\theta}$ could be to approximate the distribution of $\hat{\pi}^*(\theta) = \bar{\theta}^*(\theta) + Y$ (and hence also $\hat{\pi}$) using a central limit theorem for $\bar{\theta}^*(\theta)$ thereby obtaining
    $$\hat{\pi}^* \overset{d}{\approx} \theta + \sqrt{\frac{\theta(1 - \theta)}{n}}V^* + Y^*,$$
    where $V^* \sim \mathcal{N}(0,1)$. This would be similar in spirit to the approximation of \cite{vu2009differential}. For this approximation, there would exist explicit solutions for $\hat{\theta}$ since one would need to find the roots to a quadratic equation. Once these solutions are found, then one would pick $\hat{\theta}$ such that it lies in $[0,1]$ and minimizes the loss in \eqref{eq:ib}. While this solution is comparable in terms of empirical level and coverage, it has shown less power compared to FIMA (the properties of this approximation are left for future studies).
\end{remk}

Let us now confirm the validity of the FIMA when generating inferential solutions for CIs (and hypothesis tests). To do so, we follow the results in \cite{orso2024accurate} and make use of the tools to determine the properties of extremum estimators \citep{newey1994large} such as the one represented in \eqref{eq:ib}. Based on these properties, we can then determine the validity of the studentized bootstrap (based on the consistency of the estimator $\hat{\pi}$) and, provided that $\hat{\pi} - \pi(\theta)$ converges to a zero-mean symmetric distribution (which is indeed a zero-mean Gaussian in our case), consequently ensures the validity of the percentile bootstrap conditional on $\hat{\pi}$ \citep[see][]{van2000asymptotic, davison1997bootstrap}. In this sense, we establish a required assumption on the true parameter $\theta_0$, on which we aim to perform statistical inference.\\

\begin{assm}[Interior Point]
\label{asm.interior}
The true parameter $\theta_0$ is in the interior of the parameter space $[0,1]$.
\end{assm}

This is a common assumption for matching-based (or extremum) estimators to ensure, among others, differentiability and relative Taylor expansions around the true parameter, as well as the validity of bootstrap solutions. However, to state the validity of FIMA we also define the quantile of the estimator $\hat{\theta}$ as:
\begin{equation}
\label{eq.quantile}
    \hat{\theta}_{\alpha} := \inf \left\{ v \in \mathbb{R} \; : \; \mathbb{P}\left( \hat{\theta} \le v \,\middle|\, \hat{\pi} \right) \ge \alpha \right\}.
\end{equation}
Based on these premises, the following result confirms that the FIMA produces valid CIs.\\

\begin{theorem}
\label{thm.binom}
Letting $\alpha \in (0,1)$, under Assumptions \ref{asm.additive}-\ref{asm.interior} we have
\[
\mathbb{P}\left( \theta_0 \le \hat{\theta}_{\alpha} \right) = \alpha + o_p(1).
\]
\end{theorem}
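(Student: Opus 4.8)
\noindent\emph{Proof strategy.} The plan is to reduce the stated coverage to two asymptotic ingredients and a short quantile-crossing computation. First I would write $\pi(\theta):=\mathbb{E}\!\left[\hat{\pi}^*(\theta)\mid\text{simulation noise}\right]=\theta$ for the binding function attached to the matching problem \eqref{eq:ib}: here it is the identity, hence smooth with derivative $1$. Put $\sigma_0^2:=\theta_0(1-\theta_0)>0$, strict positivity holding because $\theta_0$ is interior (Assumption \ref{asm.interior}), and let $\sigma_0$ be its positive root. The two ingredients are: (i) $\sqrt{n}\,(\hat{\pi}-\theta_0)/\sigma_0\overset{d}{\to}\mathcal{N}(0,1)$; and (ii) conditionally on $\hat{\pi}$, $\sqrt{n}\,(\hat{\theta}-\hat{\pi})\mid\hat{\pi}\overset{d}{\to}\mathcal{N}(0,\sigma_0^2)$ in probability over $\hat{\pi}$ — a symmetric Gaussian limit matching that of $\sqrt{n}\,(\hat{\pi}-\theta_0)$. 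Granting (i)--(ii), the theorem follows by Slutsky's lemma.

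\noindent\emph{Step (i).} By Assumption \ref{asm.additive} the mechanism has the additive form \eqref{eq.dp_prop}, so $\hat{\pi}=\bar{\theta}+Y$ with $Y=Z/(n\epsilon)=O_p(n^{-1})$, hence $\sqrt{n}\,Y=o_p(1)$. The central limit theorem for the sample proportion \eqref{eq:sample_prop} gives $\sqrt{n}\,(\bar{\theta}-\theta_0)\overset{d}{\to}\mathcal{N}(0,\sigma_0^2)$, and Slutsky's lemma then yields (i).

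\noindent\emph{Step (ii).} I would condition on the observed data — hence on $\hat{\pi}$ — while the quantities $\bar{\theta}^*(\cdot)$, $Y^*$, $\{U_i^*\}$ and $D$ producing $\hat{\theta}$ are fresh and independent of it. By the reformulation \eqref{eq:alt_ib}--\eqref{eq.ib_sol}, on the event $\tilde{\theta}^*=\hat{\pi}-Y^*\in(0,1)$ — of conditional probability tending to $1$ since $\tilde{\theta}^*\overset{p}{\to}\theta_0\in(0,1)$ — the solution is $\hat{\theta}=U_{[\lfloor n\tilde{\theta}^*\rfloor]}^*+D\,(U_{[\lfloor n\tilde{\theta}^*+1\rfloor]}^*-U_{[\lfloor n\tilde{\theta}^*\rfloor]}^*)$. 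Conditioning further on $(\hat{\pi},Y^*)$, so that $\tilde{\theta}^*$ is fixed and $k:=\lfloor n\tilde{\theta}^*\rfloor$ satisfies $k/n\to\theta_0$, the uniform spacing $U_{[k+1]}^*-U_{[k]}^*$ is $O_p(n^{-1})$ uniformly over $D\in[0,1]$ and the order-statistic central limit theorem gives $\sqrt{n}\,(U_{[k]}^*-\tilde{\theta}^*)\overset{d}{\to}\mathcal{N}(0,\sigma_0^2)$; integrating out $Y^*$ (recall $\sqrt{n}\,Y^*\overset{p}{\to}0$) then yields (ii). Equivalently, (ii) is the statement that the matching estimator \eqref{eq:ib} is, conditionally on $\hat{\pi}$, an asymptotically valid bootstrap for $\theta_0$: the extremum-estimator expansion of \cite{newey1994large} and \cite{orso2024accurate} applies because $\pi(\theta)=\theta$ is smooth with nonzero derivative at the interior point $\theta_0$ (Assumption \ref{asm.interior}), giving validity of the studentized bootstrap, and symmetry about $0$ of the Gaussian limit of $\hat{\pi}-\pi(\theta_0)$ upgrades this to validity of the percentile method; the discreteness of $F_{U^*}$ and the choice $D\sim\mathrm{Beta}(\tfrac{1}{2},\tfrac{1}{2})$ perturb $\hat{\theta}$ only at order $O_p(n^{-1})$, negligible after $\sqrt{n}$ rescaling.

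\noindent\emph{Conclusion and main obstacle.} From (ii), together with convergence of quantiles of laws converging weakly to a continuous, strictly increasing limit, the conditional $\alpha$-quantile \eqref{eq.quantile} satisfies $\sqrt{n}\,(\hat{\theta}_\alpha-\hat{\pi})/\sigma_0\overset{p}{\to}z_\alpha:=\Phi^{-1}(\alpha)$. Writing $\sqrt{n}\,(\hat{\theta}_\alpha-\theta_0)/\sigma_0=\sqrt{n}\,(\hat{\theta}_\alpha-\hat{\pi})/\sigma_0+\sqrt{n}\,(\hat{\pi}-\theta_0)/\sigma_0$ and combining with (i) via Slutsky's lemma gives $\sqrt{n}\,(\hat{\theta}_\alpha-\theta_0)/\sigma_0\overset{d}{\to}\mathcal{N}(z_\alpha,1)$, so — $0$ being a continuity point of the limit —
\[
\mathbb{P}\!\left(\theta_0\le\hat{\theta}_\alpha\right)=\mathbb{P}\!\left(\tfrac{\sqrt{n}\,(\hat{\theta}_\alpha-\theta_0)}{\sigma_0}\ge 0\right)\longrightarrow\mathbb{P}\!\left(\mathcal{N}(z_\alpha,1)\ge 0\right)=\Phi(z_\alpha)=\alpha ,
\]
which is the assertion (the left-hand side being a deterministic sequence, the $o_p(1)$ of the statement is simply $o(1)$). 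I expect the main obstacle to be Step (ii): controlling the extremum-estimator remainder so that it is $o_p(1)$ after $\sqrt{n}$ scaling \emph{and} conditionally on the data, re-centering the simulated variance from $\hat{\theta}(1-\hat{\theta})$ to $\sigma_0^2$ by a continuity / stochastic-equicontinuity argument as $\hat{\theta}\overset{p}{\to}\theta_0$, and checking that the interval (discrete-inverse) solution together with the fiducial randomization $D$ contributes only $O_p(n^{-1})$ — including that the boundary cases $\tilde{\theta}^*\notin(0,1)$ have conditional probability $o(1)$. Steps (i) and the closing Slutsky computation are routine.
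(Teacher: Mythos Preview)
Your proposal is correct and, in its second framing of Step~(ii) (the extremum-estimator / percentile-bootstrap verification citing \cite{newey1994large} and \cite{orso2024accurate}), coincides with the paper's own proof: the paper simply checks that (a) $[0,1]$ is convex and compact with $\theta_0$ interior, (b) the binding function $\pi(\theta)=\mathbb{E}[\hat{\pi}(\theta)]=\theta$ is injective and continuously differentiable, and (c) $\sqrt{n}\,[\theta_0(1-\theta_0)]^{-1/2}(\hat{\pi}-\theta_0)\overset{D}{\to}\mathcal{N}(0,1)$ since the privacy noise is $O_p(n^{-1})$, then invokes the framework and stops. Your additional direct argument --- deriving the conditional law of $\hat{\theta}\mid\hat{\pi}$ via the order-statistic CLT for $U_{[k]}^*$, controlling the spacing and the randomization $D$ at order $O_p(n^{-1})$, and closing with an explicit quantile/Slutsky computation --- is a more elementary and self-contained route that the paper does not take; it makes transparent why the fiducial interval solution and the boundary events $\tilde{\theta}^*\notin(0,1)$ are asymptotically negligible, at the cost of more bookkeeping. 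The paper's version is terser and modular (it ports directly to Proposition~\ref{prop.multinom} by the same three-condition check), while yours illuminates the mechanism and would stand even without appealing to the external extremum-estimator results.
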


\begin{proof}
We verify the conditions for the validity of the percentile bootstrap \citep{van2000asymptotic, davison1997bootstrap} and therefore use the results in \cite{newey1994large} for extremum estimators such as the estimator proposed in \eqref{eq:ib}. The first condition is a common regularity condition and requires the parameter space to be convex and compact, in addition to the true parameter being an interior point of this space: indeed the parameter space $[0,1]$ is obviously convex and compact and under Assumption \ref{asm.interior} we trivially respect this condition. The second condition requires $\pi(\theta)$ to be continuously differentiable and injective (one-to-one) on $[0,1]$: we have that $\pi(\theta):=\mathbb{E}[\hat{\pi}(\theta)] = \mathbb{E}[\bar{\theta} + Y] = \mathbb{E}[\bar{\theta}] + \mathbb{E}[Y] = \theta$, and hence $\pi(\theta)$ is injective and continuously differentiable, therefore trivially respecting this condition as well. Defining $\sigma^2(\theta)$ as the asymptotic variance of the DP estimator $\hat{\pi}$, the final condition requires the scaled and centered DP estimator $\sqrt{n}\,\sigma^{-2}(\theta)\,(\hat{\pi} - \pi(\theta))$ to converge, uniformly in $\theta \in [0,1]$, to a zero-mean random variable with a continuous distribution that does not depend on $\theta$: since the DP noise is $O_p(\nicefrac{1}{n})$, we have that 
$$\sqrt{n}\,[\theta_0(1 - \theta_0)]^{-\nicefrac{1}{2}}\,(\hat{\pi} - \theta_0) \overset{D}{\to} \mathcal{N}(0, 1),$$
thereby respecting this condition and thus concluding the proof.
\end{proof}

\begin{remk}
\label{rmk:one_count}
    The FIMA can easily be adapted to the case where, instead of the sample proportion, the privatized quantity is the count itself where, multiplying by $n$ on both sides, Equation \eqref{eq:sample_prop} simply becomes:
    $$\bar{X}_k = \sum_{i = 1}^n \mathbbm{1}(X_i = k),$$
    i.e. the sum of observations in class $k$. In this case, continuing with the binomial (two-class) example and defining $\hat{X} = \bar{X} + Y$ as the DP count for one class (where the sensitivity for the noise is now $\Delta_{\bar{X}} = 1$) it is straightforward to see that:
    $$\check{\pi} = \frac{1}{n}\,\hat{X},$$ is also an unbiased and consistent estimator for $\theta_0$ with asymptotically normal distribution thereby respecting  the conditions for Theorem \ref{thm.binom} to hold. Hence, we have that:
    $$\underset{\theta \in [0,1]}{\argzero} \, [\hat{X} - \hat{X}^*(\theta)] = \underset{\theta \in [0,1]}{\argzero} \,[\check{\pi} - \check{\pi}^*(\theta)],$$
    where $\hat{X}^*(\theta)$ and $\check{\pi}^*(\theta)$ are the simulated counterparts in our matching problem. Therefore we have that the solution to the latter is given by $\hat{\theta} = F_{U^*}^{-1}(\check{\theta}^*)$ where $\check{\theta}^*$ is obtained by developing the matching problem as follows:
    \begin{align*}
        \underbrace{\frac{1}{n}\hat{X}}_{\check{\pi}} &= \underbrace{\frac{1}{n}\hat{X}^*(\theta)}_{\check{\pi}^*(\theta)}\\
        &=\frac{1}{n}\bar{X}(\theta) + \frac{1}{n}Y^* = \bar{\theta}^*(\theta) + \frac{1}{n}Y^*\\
        \implies& \underbrace{\bar{\theta}^*(\theta)}_{F_{U^*}(\theta)} = \underbrace{\frac{1}{n}(\hat{X} - Y^*)}_{\check{\theta}^*},
    \end{align*}
    where $\bar{X}(\theta)$ is the simulated non-private count.
\end{remk}

Let us now consider the more general multinomial case for $K > 2$, i.e. $X\sim f_{\bm{\theta}}^K$ where $\bm{\theta} \in \Delta^{K-1} \subset \mathbb{R}^{K}$. In this case, the non-private estimator of the latter is represented by $\bar{\bm{\theta}} := [\bar{\theta}_1, \hdots, \bar{\theta}_{K}]$ where $\bar{\theta}_k$ is the sample proportion for the $k^{\text{th}}$ class defined in \eqref{eq:sample_prop}. Although this estimator could be privatized in calibrated manners for the specific problem \citep[see e.g.][]{kairouz2014extremal, bhowmick2018protection, wang2019subsampled}, these approaches would not allow to find an explicit solution for the FIMA. Hence, we preserve the component-wise additive mechanism in \eqref{eq.dp_prop} which allows us to define the DP proportion for the $k^{\text{th}}$ class as $\hat{\pi}_k$, with $\hat{\bm{\pi}} : = [\hat{\pi}_1, \hdots, \hat{\pi}_{K}]$. As a result, we can now apply Algorithm \ref{algo_ifb} to each component of the vector $\hat{\bm{\pi}}$ independently to obtain the vector of FIMA solutions $\hat{\bm{\theta}} := [\hat{\theta}_1, \hdots, \hat{\theta}_{K}]$. However, while for the binomial case the FIMA solution lies within the $(0,1)$ interval, as mentioned earlier, this multinomial approach would not guarantee that the vector $\hat{\bm{\pi}}$ lives within the required $(K-1)$-dimensional simplex. As shown further on, this does not necessarily represent a problem for various statistical inferential procedures for categorical data.  Therefore let us first consider the setting where the FIMA solutions $\hat{\bm{\theta}}$ are used to deliver a single test statistic of interest: this can be the $\chi^2$ test statistic (where estimated proportions are used to compute expected counts) or simply selecting a single component of the vector $\hat{\bm\theta}$ to perform inference. In this sense, let us consider a function $\varphi: \mathbb{R}^K \to \mathbb{R}$ that defines such a statistic and make the following assumption on it.\\

\begin{assm}[Differentiable Stat]
\label{asm.diff_stat}
The function $\varphi$ is continuously differentiable with respect to $\bm{\theta}$.
\end{assm}

This assumption is needed since it allows us to use the continuous mapping theorem (and the Delta method) to ensure that $\varphi(\hat{\bm{\theta}})$ converges to $\varphi(\bm{\theta})$, based on the uniform convergence of $\hat{\bm{\theta}}$, as well as to its asymptotic Gaussian distribution. Based on this, we also define $\vartheta:=\varphi(\bm{\theta})$ and, consequently, let $\hat{\vartheta}_{\alpha}$ represent the $\alpha$-level quantile of the distribution of $\hat{\vartheta}:=\varphi(\hat{\bm{\theta}})$ defined similarly to that in \eqref{eq.quantile}. For completeness, we also adapt Assumption \ref{asm.interior} to this multinomial setting.\\

\begin{assm}[Interior Point (Multinomial)]
\label{asm.interior.multinom}
The true parameter $\bm{\theta}_0$ is in the interior of the parameter space $\Theta$.
\end{assm}

Having stated the above assumptions, we can now give the following result.\\

\begin{prop}
\label{prop.multinom}
Letting $\alpha \in (0,1)$, under Assumptions \ref{asm.additive}, \ref{asm.diff_stat} and \ref{asm.interior.multinom} we have
\[
\mathbb{P}\left( \vartheta_0 \le \hat{\vartheta}_{\alpha} \right) = \alpha + o_p(1).
\]
\end{prop}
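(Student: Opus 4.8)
The plan is to reduce Proposition \ref{prop.multinom} to the multivariate analogue of Theorem \ref{thm.binom} composed with a Delta-method step for the map $\varphi$. Concretely, I would proceed in two stages: first establish that the vector of component-wise FIMA solutions $\hat{\bm{\theta}}$ is a consistent, jointly asymptotically Gaussian estimator of $\bm{\theta}_0$ whose conditional (FIMA) law constitutes a valid percentile bootstrap for $\bm{\theta}_0$; then transfer this to the scalar functional $\hat{\vartheta}=\varphi(\hat{\bm{\theta}})$ using Assumption \ref{asm.diff_stat} together with the continuous mapping theorem and the Delta method, and finally repeat the closing lines of the proof of Theorem \ref{thm.binom}.

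For the first stage I would verify the regularity conditions of \citet{newey1994large} coordinate by coordinate, exactly as in the proof of Theorem \ref{thm.binom}: for each class $k$, $\hat{\pi}_k=\bar{\theta}_k+Y_k$ is an additive-mechanism DP proportion (Assumption \ref{asm.additive}) whose binding function $\pi_k(\bm{\theta})=\mathbb{E}[\hat{\pi}_k]=\theta_k$ is continuously differentiable and injective, whose parameter space $\Theta$ is convex and compact with $\bm{\theta}_0$ in its interior (Assumption \ref{asm.interior.multinom}), and for which $\sqrt{n}(\hat{\pi}_k-\theta_{0,k})$ is asymptotically normal because the privacy noise is $O_p(1/n)$ and hence negligible. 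The only new ingredient is that these statements must hold \emph{jointly} in $k$, which is immediate because the underlying multinomial central limit theorem is itself a joint statement: $\sqrt{n}(\hat{\bm{\pi}}-\bm{\theta}_0)\overset{D}{\to}\mathcal{N}(\bm{0},\bm{\Sigma}(\bm{\theta}_0))$ with $\bm{\Sigma}(\bm{\theta}_0)=\mathrm{diag}(\bm{\theta}_0)-\bm{\theta}_0\bm{\theta}_0^{\top}$, and the component-wise matching map is continuous and uniform in $\bm{\theta}\in\Theta$. Consequently $\hat{\bm{\theta}}\overset{p}{\to}\bm{\theta}_0$ and, conditionally on $\hat{\bm{\pi}}$, the recentered FIMA law of $\sqrt{n}(\hat{\bm{\theta}}-\hat{\bm{\pi}})$ converges in probability to a mean-zero Gaussian.

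For the second stage, Assumption \ref{asm.diff_stat} makes $\varphi$ continuously differentiable, so the continuous mapping theorem gives $\hat{\vartheta}=\varphi(\hat{\bm{\theta}})\overset{p}{\to}\varphi(\bm{\theta}_0)=\vartheta_0$ and the Delta method gives $\sqrt{n}(\hat{\vartheta}-\vartheta_0)\overset{D}{\to}\mathcal{N}\!\big(0,\nabla\varphi(\bm{\theta}_0)^{\top}\bm{\Sigma}(\bm{\theta}_0)\nabla\varphi(\bm{\theta}_0)\big)$, with the analogous statement holding for the conditional FIMA law of $\sqrt{n}(\hat{\vartheta}-\varphi(\hat{\bm{\pi}}))$ given $\hat{\bm{\pi}}$. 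Since this common limit is zero-mean, symmetric (Gaussian) and does not depend on $\bm{\theta}$, the percentile bootstrap is valid for $\vartheta$ by the same reasoning invoked in Theorem \ref{thm.binom} \citep{van2000asymptotic, davison1997bootstrap}, with $\varphi(\hat{\bm{\pi}})$ playing the role of the plug-in point estimator; inverting the $\alpha$-quantile $\hat{\vartheta}_\alpha$ of the conditional FIMA law of $\hat{\vartheta}$ then yields $\mathbb{P}(\vartheta_0\le\hat{\vartheta}_\alpha)=\alpha+o_p(1)$.

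The step I expect to require the most care — the main obstacle — is checking that the limiting variance produced by the \emph{component-wise} FIMA matches the true sampling variance of the functional after contraction by $\nabla\varphi(\bm{\theta}_0)$. Because Algorithm \ref{algo_ifb} is applied to each class independently, the conditional FIMA covariance of $\hat{\bm{\theta}}$ given $\hat{\bm{\pi}}$ is asymptotically diagonal and does not carry the negative off-diagonal structure of the multinomial $\bm{\Sigma}(\bm{\theta}_0)$, so one must verify that $\nabla\varphi(\bm{\theta}_0)^{\top}$ contracted with that diagonal matrix coincides with $\nabla\varphi(\bm{\theta}_0)^{\top}\bm{\Sigma}(\bm{\theta}_0)\nabla\varphi(\bm{\theta}_0)$ for the statistics of interest; this holds trivially when $\varphi$ selects a single coordinate, in which case the whole argument collapses exactly onto Theorem \ref{thm.binom}. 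A secondary difficulty is the degenerate case $\nabla\varphi(\bm{\theta}_0)=\bm{0}$, which is precisely what occurs for the Pearson $\chi^2$ statistic evaluated under its null: there the first-order Delta expansion vanishes and one must instead expand $\varphi$ to second order and match the resulting quadratic-form limits of the sampling and FIMA laws rather than Gaussian limits. Both points are routine but are where the substance of the proof lies.
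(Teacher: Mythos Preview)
Your approach is essentially the paper's: verify the three extremum-estimator conditions of \citet{newey1994large} in the multinomial setting (convex compact $\Theta$ with $\bm{\theta}_0$ interior via Assumption~\ref{asm.interior.multinom}; $\bm{\pi}(\bm{\theta})=\bm{\theta}$ injective and smooth, combined with Assumption~\ref{asm.diff_stat} on $\varphi$; and the joint multinomial CLT $\sqrt{n}\,\Sigma(\bm{\theta}_0)^{-1/2}(\hat{\bm{\pi}}-\bm{\theta}_0)\overset{D}{\to}\mathcal{N}_K(\bm{0},\bm{I})$ with $\Sigma(\bm{\theta})=\mathrm{diag}(\bm{\theta})-\bm{\theta}\bm{\theta}^{\top}$), then conclude. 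The paper's proof is in fact terser than yours: it checks exactly these three conditions and stops, without writing out the Delta-method step or the conditional-law matching explicitly.

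The two obstacles you single out---the mismatch between the diagonal covariance of the component-wise FIMA and the full multinomial $\Sigma(\bm{\theta}_0)$, and the degenerate-gradient case $\nabla\varphi(\bm{\theta}_0)=\bm{0}$ relevant to the $\chi^2$ statistic under its null---are not addressed in the paper's proof at all. The paper simply records the CLT for $\hat{\bm{\pi}}$ with the correct $\Sigma(\bm{\theta}_0)$, implicitly treating the matching in \eqref{eq:ib} as a joint vector problem, and takes Assumption~\ref{asm.diff_stat} at face value. Your worries are legitimate and go beyond what the paper's own argument supplies, but at the level of detail the paper adopts, the three-condition check is all that is offered.
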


\begin{proof}
We follow the same steps as the proof of Theorem \ref{thm.binom} and verify the conditions by adapting some definitions. Specifically, since the parameter space $\Theta$ is convex and compact, under Assumption \ref{asm.interior.multinom} we respect the first condition. Similarly to Theorem \ref{thm.binom}, we also see that $\bm{\pi}(\bm{\theta})$ is injective (since $\bm{\pi}(\bm{\theta}) = \bm{\theta}$) and continuously differentiable which, along with Assumption \ref{asm.diff_stat} on the differentiability of $\varphi$, allows us to verify the second condition. Also, since the DP noise is $O_p(\nicefrac{1}{n})$, we have that 
$$\sqrt{n}\,\Sigma(\bm{\theta_0})^{-\nicefrac{1}{2}}(\hat{\bm{\pi}} - \bm{\theta}_0) \overset{D}{\to} \mathcal{N}_K(\bm{0}, \bm{I}),$$ 
where $\Sigma(\bm{\theta}) = \text{diag}(\bm{\theta}) - \bm{\theta}\bm{\theta}^{\top}$, thereby respecting the third condition and concluding the proof.
\end{proof}

Notice that this result implies the validity of the percentile bootstrap for the quantity $\hat{\vartheta}$ which represents a continuously differentiable transform of $\hat{\bm{\theta}}$. The majority of the standard inferential tools for categorical data respect this assumption such as, for example, the $\chi^2$-statistic or logit transforms for the components of $\hat{\bm{\theta}}$ (which allow to estimate the parameters of logistic models with categorical predictors). An exception may be represented by functions (operators) that project the FIMA solution $\hat{\bm{\theta}}$ onto the $(K-1)$-dimensional simplex. More specifically, projections onto the simplex are often not differentiable everywhere and may not be asymptotically Gaussian when close to the boundaries of the simplex. Although this remains problematic in finite samples, we can, however, observe that based on Assumption \ref{asm.interior.multinom} and on the consistency of $\hat{\bm{\theta}}$, the result of Proposition \ref{prop.multinom} still holds for such projections as $n \to \infty$.

\section{Experiments}
\label{sec.experiments}

We investigate the performance of FIMA in different categorical settings, starting from the basic binomial data with one-sample proportion and two-sample proportion tests, and going to the $\chi^2$-test (logistic models are investigated in Appendix \ref{app.experiments}). For all settings, we will use $Z \sim \text{Laplace}(0,1)$, therefore ensuring pure $\epsilon$-DP, and a privacy budget of $\epsilon = 1$. Moreover, for each experimental condition, we run each method $B=10^4$ times to evaluate its statistical properties and, for the FIMA, we set $H=10^3$ which is generally reasonable to have a good approximation for the bootstrap distribution \citep{davison1997bootstrap}. It must be noted that there are few methods that deliver statistical inference guarantees in all these settings, hence we compare FIMA to different approaches across the different settings.

\subsection{One-Sample Proportion}

We first study the one-sample proportion setting for which an optimal DP inferential solution already exists and was put forward in \cite{c9}: we refer to this method as Tulap since it is based on the Truncated-Uniform-Laplace (Tulap) distribution derived in their work. As mentioned in the introduction, this approach delivers the UMP under DP constraints for binomial data \footnote{Code can be found at \url{https://github.com/Zhanyu-Wang/Simulation-based_Finite-sample_Inference_for_Privatized_Data/blob/main/table1/Awan_binomialDP.R}}. Therefore, we compare the FIMA to this method and, as benchmarks, we compare it also to two non-DP options consisting of the standard z-test for proportions (NP) and the exact binomial test (Exact). We focus our attention on the performance of CIs derived from all these methods in small sample settings: we choose to study coverage on samples of size $n = 30$ and for a level $1 - \alpha = 0.95$. We produce $B$ CIs for each method and for different values of $\theta_0$, i.e. $\theta_0 \in \{0.1, 0.11, 0.12, \ldots, 0.985\}$, also to check coverage when moving close to the boundaries of the parameter space. The left plot of Fig. \ref{fig:one_sample_ci} shows the results of this simulation: the non-private benchmarks show the well-known behavior in terms of CI coverage where (i) the z-test suffers from the normal asymptotic approximation for a discrete-natured problem (especially in finite-samples like this simulation) and generally under-covers while (ii) the exact CIs guarantee at least the $1- \alpha$ coverage but also suffers from inconsistent jumps along the range of $\theta_0$ values and tends to over-cover at the boundaries. On the other hand, the two DP CIs both appear to have an exact coverage across all values of $\theta_0$ including the boundaries and are therefore counterintuitively preferable to the non-private benchmarks when it comes to CI coverage. If we focus on the length of the CIs the right plot in Fig. \ref{fig:one_sample_ci} shows that the non-private CIs obviously have shorter intervals across all the range of values $\theta_0$ and therefore are more precise, although only the Exact CIs guarantee coverage. As expected, the two DP approaches have larger CI lengths, with Tulap generally having slightly shorter intervals across the different values of $\theta_0$ compared to FIMA, except towards the boundaries of the parameter space, where the FIMA appears to be marginally more precise. 
\begin{figure}[t]
    \includegraphics[width = \textwidth]{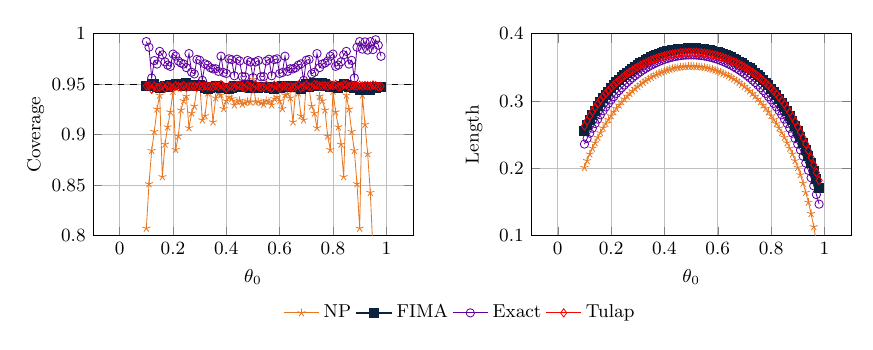}
\caption{Comparison of the coverage (left) and length (right) of $95\%$ CIs for each method and for different parameter values of $\theta_0$.}
\label{fig:one_sample_ci}
\end{figure}
We also investigated the performance of hypothesis tests based on these approaches. For this purpose, we consider the set of hypotheses: $H_0: \theta_0 \geq \gamma$ (null hypothesis) versus $H_A: \theta_0 < \gamma$ (alternative hypothesis), where $0 \leq \gamma \leq 1$ is a constant that we are interested in testing against evidence from the sample. For this experiment, we first study the level (type-I error) of the tests based on these different methods and, for this purpose, set $\gamma = \theta_0$ for different values of $\theta_0 = {0.1, 0.2, \hdots, 0.9}$. Fixing the sample size to $n=30$ to evaluate finite sample performance, in the left plot of Fig. \ref{fig:one_prop_hyp} we can see that all methods are slightly liberal (i.e., reject $H_0$ more than needed) but are all within a reasonable range of the threshold $\alpha = 0.05$ considering the sample size (with Tulap showing the best overall performance). In the right graph, we can observe the power of these methods by fixing $\theta_0 = 0.2$ and varying the value of $\gamma$ from $0.2$ to $1$. It can be observed that the non-private method is obviously the most powerful but is closely followed by the Tulap and FIMA approaches, which have comparable performance in terms of power. In the context of binomial data, our empirical results highlight how the proposed FIMA, while comparing favorably with respect to the standard non-DP approaches, also has a performance comparable to the optimal DP method based on the Tulap distribution of \cite{c9}. Additional results for this setting, with larger and varying sample sizes, can be found in Appendix \ref{app.experiments} where the FIMA shows good performance (often better) than its current alternatives.

\begin{figure}[t]
    \includegraphics[width = \textwidth]{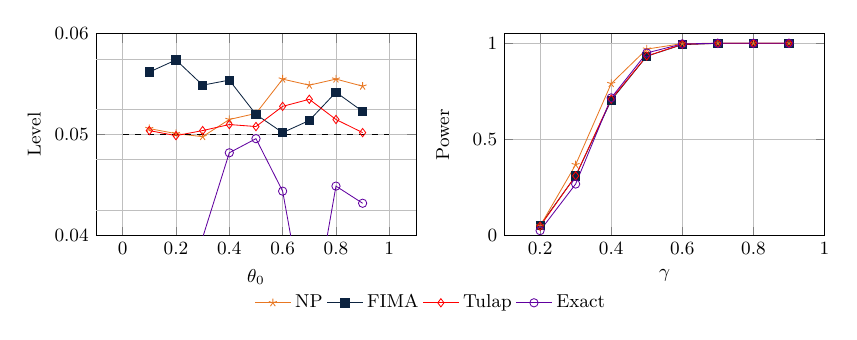}
\caption{Comparison of the level (left) and power (right) of the different test procedures under the one-sided hypothesis test at $\alpha = 0.05$ for different parameter values of $\theta_0$.}
\label{fig:one_prop_hyp}
\end{figure}

\subsection{Two-Sample Proportions}
\label{sec.two_sample}

We now consider the two-sample (independent) proportions test where we are interested in testing $H_0: \theta_1 - \theta_2 \geq 0$ versus $H_A: \theta_1 - \theta_2 < 0$, where $\theta_1$ represents the proportion from the first population and $\theta_2$ that from the second. In this setting, for the FIMA we consider the statistic $\lambda_h = \hat{\theta}_{1,h} - \hat{\theta}_{2,h}$ where $\hat{\theta}_{i,h}$ is the $h^{\text{th}}$ solution of Algorithm \ref{algo_ifb} for the $i^{\text{th}}$ sample. It can be noted that $\lambda_h$ respects Assumption \ref{asm.diff_stat} and therefore fulfills the conditions of Proposition \ref{prop.multinom} (assuming Assumption \ref{asm.interior.multinom} holds). As a DP alternative method, we make use of a recently proposed general hypothesis testing framework under DP put forward by \cite{kazan2023test}, called the Test of Tests (ToT), which compared favorably to other existing alternatives. The latter relies on the sample-and-aggregate approach discussed in \cite{canonne2019structure} and we make use of the optimization procedure proposed in \cite{kazan2023test} to find the best $m$ (number of sub-samples) and $\alpha_0$ (the sub-test significance threshold) \footnote{Code can be found at \url{https://github.com/diff-priv-ht/test-of-tests}}. Of course, we also use the non-private z-test (NP) as a general benchmark for the DP options considered here. Fig. \ref{fig:level_power_two_sample} can be interpreted similarly to the plots in Fig. \ref{fig:one_prop_hyp}. Indeed, the plot on the left shows the level of the considered approaches when the sample size is $n=30$ and where the x-axis represents different values of $\theta_0 = \theta_1 = \theta_2$ (under $H_0$), while the right plot shows the power of the methods when fixing $\theta_1 = 0.2$ and increasing $\theta_2$ from $0.2$ to $0.9$. We can see that the ToT performs very well (indeed it is the best) in terms of type-I error while the NP and FIMA are reasonable when the two parameters are close to $0.5$ but worsen (reject more than needed) when getting closer to the boundaries of the parameter space. The latter is generally expected in small samples but it can be seen that the FIMA is generally worse in these regions, although it still lies within an acceptable level considering the sample size. On the other hand, the ToT performs well across all parameter ranges. When looking at power we see a different picture: the NP test is obviously the most powerful option while the FIMA follows a similar progress below the NP power curve (as expected for a DP approach). For the ToT we consider two versions: one is the actual (empirical) ToT while the other is the theoretical power expected for the ToT \citep[see][]{kazan2023test}. We can observe that the theoretical power has a good performance and is better than the FIMA, while the empirical ToT appears to suffer the small sample setting and is not performing as it should do theoretically.  Putting aside the small sample (which makes subsampling more complicated), it is probably the case that the default optimal parameter approach provided in \cite{kazan2023test} doesn't fit for these categorical settings; hence it may be necessary to also adapt the hyper-parameter tuning procedure for the ToT and therefore these results may not be considered as conclusive for this method for these experiments. However, given also the excellent performance in terms of level, we chose to keep the ToT as a potential alternative to consider in these experiments. Similarly to the one-sample proportion experiments, Appendix \ref{app.experiments} reports further results in this setting with larger and varying sample sizes where the FIMA compares favorably to the considered alternatives.

\begin{figure}[t]
    \includegraphics[width = \textwidth]{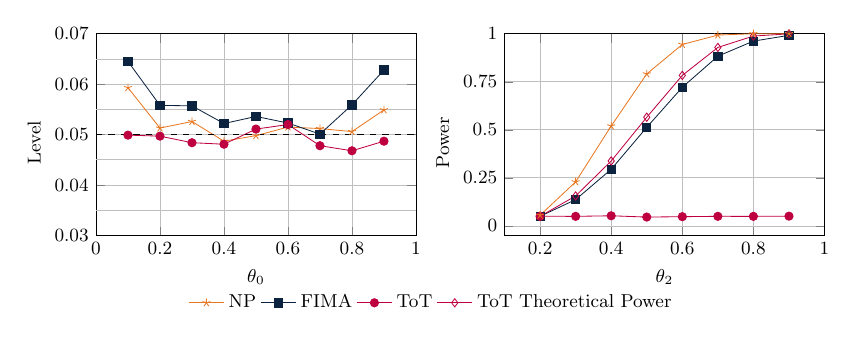}
\caption{Comparison of the level (left) and power (right) of the different test procedures for two-sample hypothesis test at $\alpha = 0.05$ and for different parameter values of $\theta_0$.}
\label{fig:level_power_two_sample}
\end{figure}

\subsection{$\chi^2$-Test}
\label{sec.chi_square}

We perform a final experiment for the setting in which we may be interested in performing a $\chi^2$-test between two categorical variables $X \sim f_{\bm{\theta}_1}^{K_1}$ and $Q \sim f_{\bm{\theta}_2}^{K_2}$ with $K_1$ and $K_2$ classes respectively. In this case, we have to slightly adapt the FIMA to generate a bootstrap distribution of the $\chi^2$ statistics under the null hypothesis of independence between the two categorical variables. Taking into account the distribution of the $n$ total counts across different combinations of the two variables $X$ and $Q$, these can be modeled by a joint multinomial distribution $f_{\bm{\theta}_0}$ with $K = K_1 \cdot K_2$ categories in total and corresponding vector of probabilities $\bm{\theta}_0 = \texttt{vectorize}(\bm{\theta}_1 \bm{\theta}_2^T$), with $\texttt{vectorize}(A)$ representing the vectorization/flattening of the matrix $A$. With this framework in mind, we assume that we obtain privatized counts $\hat{G}_{ij}$ in the form of a contingency table with $i$ and $j$ representing the row and column respectively, with $i = 1, \hdots, K_1$ and $j = 1, \hdots, K_2$. In this case, to perform the $\chi^2$-test we need to slightly adapt the reasoning in Remark \ref{rmk:one_count}. Indeed, we now are computing marginal counts for each variable $X$ and $Q$ which consist in sums of privatized counts $\hat{G}_{ij}$ across the rows and the columns. Hence, taking a marginal count across the rows as an example (i.e. targeting the marginal distribution $f_{\theta_2}^{K_2}$ of $Q$), we have that
$$\hat{Q}_{j} = \sum_{i = 1}^{K_1} \hat{G}_{ij} = \sum_{i = 1}^{K_1} \left(\bar{G}_{ij} + Y_{ij}\right),$$
where $\bar{G}_{ij}$ represents the non-privatized count of the contingency table ($\hat{X}_i$ therefore represents the sum across columns of privatized counts for the $i^{\text{row}}$). As a result, following the same reasoning as Remark 1, we have that the FIMA solution for $\theta_{2j}$ (i.e. the $j^{\text{th}}$ element of $\theta_2$) is given by $\hat{\theta}_{2j} = F_{U^*}^{-1}(\check{\theta}_{2j}^*)$, with $\check{\theta}_{2j}^*$ now being defined as:
$$\check{\theta}_{2j}^* = \frac{1}{n}\left(\hat{Q}_j - \sum_{i = 1}^{K_1} Y_{ij}^*\right),$$
thereby defining the vector of quantities $\check{\bm{\theta}}_2 = [\check{\theta}_{21}, \hdots, \check{\theta}_{2K_2}]$ (with $\check{\bm{\theta}}_1$ denoting the same vector of the other margin). Algorithm \ref{algo_ifb_chi2} provides the overview of this adaptation of the FIMA to this specific test. It must be noted that this is an adaptation specifically for this multinomial context (with sums of privacy noise terms) which would not be required if directly targeting the $K$-dimensional vector $\bm{\theta}_0$ characterizing the individuals counts in the table, in which case the strategy in Remark \ref{rmk:one_count} can directly be applied component-wise. This is in fact the case for logistic regression with categorical predictors in Appendix \ref{app:logistic}.
\begin{algorithm}
\begin{algorithmic}[1]
    \State \textbf{Input:} $\hat{\bm{G}}$: table of DP counts; $n$: data size; $\epsilon$: privacy budget; $H$: number of solutions; $q$: distribution for $Y$; $g$: distribution for $D$; $\delta > 0$: small quantity close to zero.
    \State Compute marginal counts $\hat{\mathbf{X}} = \sum_{j=1}^{K_2} \hat{G}_{ij}$ and $\hat{\mathbf{Q}} = \sum_{i=1}^{K_1} \hat{G}_{ij}$
    \For{$h = 1, \hdots, H$}
    \State Compute $\check{\bm{\theta}}_{1h}^* = \nicefrac{1}{n}(\hat{\mathbf{X}} - \mathbf{Y}_{1h}^*)$ and $\check{\bm{\theta}}_{2h}^* = \nicefrac{1}{n}(\hat{\mathbf{Q}} - \mathbf{Y}_{2h}^*)$, where $\mathbf{Y}_{1,h}^* :=[\sum_{j=1}^{K_2}Y_{1,h,j}^*, \hdots, \sum_{j=1}^{K_2}Y_{K_1, h, j}^*]$ and $\mathbf{Y}_{2,h}^* :=[\sum_{i=1}^{K_1}Y_{1,h,j}^*, \hdots, \sum_{i=1}^{K_1}Y_{K_2, h, j}^*]$, with $Y_{k, h, j} \sim q\left(\frac{2}{n \epsilon}\right)$.
    \State Run lines 4-11 of Algorithm \ref{algo_ifb} on each element of $\check{\bm{\theta}}_{1h}^*$ and $\check{\bm{\theta}}_{2h}^*$ to obtain matrix $\hat{\bm{\theta}}_h = \hat{\bm{\theta}}_{1h}^* \hat{\bm{\theta}}_{2h}^{*T}$
    \State Generate contingency table $\bar{\mathbf{G}}_h^*$ from mulitnomial distribution with parameters $n$ and \texttt{vectorize}($\hat{\bm{\theta}}_h$) 
    \State Apply DP noise to $\bar{\mathbf{G}}_h^*$ using additive noise distribution $q$ to obtain new contingency table of DP counts $\hat{\mathbf{G}}_h^*$
    \State Compute $\chi_h^2$ (the  $\chi^2$-statistic) based on contingency table $\hat{\mathbf{G}}_h^*$ 
    \EndFor
    \State \textbf{Output:} A sequence $\{\chi_h^2\}$ for $h = 1, \hdots, H$
\end{algorithmic}
\caption{FIMA for $\chi^2$-test}
\label{algo_ifb_chi2}
\end{algorithm}

\begin{remk}
    In the case of Algorithm \ref{algo_ifb_chi2}, we must underline that the requirements of Proposition \ref{prop.multinom} only hold when the DP noise comes from a continuous distribution (and not from a discrete distribution). Indeed, in this case the $\chi^2$-statistic is continuously differentiable with respect to $\bm{\theta}_0$ and thus respects Assumption \ref{asm.diff_stat}. Underlining that the results presented in this section are similar also in the discrete case, the theory nevertheless does not currently support this setting.
\end{remk}

\noindent We compare the FIMA in Algorithm \ref{algo_ifb_chi2} to the non-private $\chi^2$-test (NP) as the standard benchmark, and also compare it to DP alternatives which include the ToT used in our previous experiment, as well as the specifically tailored DP $\chi^2$-test put forward by \cite{c5} (we refer to this as Gabo). We consider the same experimental setup as that of \cite{c5} which consists of a simple $2 \times 2$ contingency table generated for different sample sizes ranging from $n = 10$ to $n=20^4$ and with privacy budget $\epsilon = 0.1$ (in \cite{c5} they use $(\epsilon, \delta)-DP$ but we set $\delta = 0$). More specifically, to study the level of the methods, we specify $\bm{\theta}_0 = [0.25, 0.25, 0.25, 0.25]$ under the null $H_0$ (independence) and, for the power under the alternative $H_A$, we set $\bm{\theta}_0 = [0.25, 0.25, 0.25, 0.25] + 0.01 \cdot [1, -1, -1, 1]$. Also, for the FIMA, we use $H = 10^4$ given the higher number of parameters (counts) in this model and, hence, the need to better approximate the FIMA distribution.

\begin{figure}
    \includegraphics[width = \textwidth]{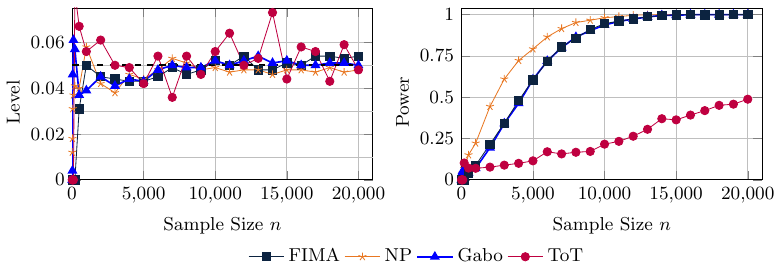}
\caption{Comparison of the level (left) and power (right) of the different test procedures for $\chi^2$-test at $\alpha = 0.05$ for different sample sizes $n$.}
\label{fig:chi2}
\end{figure}

It can be seen in the left plot of Fig. \ref{fig:chi2} how the NP and Gabo fluctuate with more amplitude in smaller sample sizes and then start fluctuating more closely to the desired $\alpha = 0.05$ level as the sample size increases. However, the ToT and FIMA appear to have a lower fluctuation in small samples, while the FIMA appears to be more stable (from the conservative side) around the required level as the sample size increases, the ToT appears to fluctuate with more amplitude. With regard to power, putting aside the NP as the most powerful benchmark, we can see that Gabo and FIMA have very similar performance in terms of power (with Gabo being slightly better overall), whereas the ToT also sees increasing power but less than its alternative, possibly being affected by similar issues as in Sec. \ref{sec.two_sample} thereby probably requiring a better tuning of the hyperparameters $m$ and $\alpha_0$ (among others). 

\subsection{Running Times}

We complete our empirical experiments by highlighting the mean running times of the different approaches with varying sample sizes. Let us therefore consider the one- and two-sample proportion coverage and tests: for the coverage, we record the running time for sample sizes $n~\in~\left\{ 8 \times 2^k : k = 0, \dotsc, 6 \right\} \,\cup\, \{700, 900, 1200\}$ while for tests we have $n~\in~\{16,\ 30,\ 50,\ 100,\ 150,\ 200,\ 350,\ 400,\ 500\}$. Fig. \ref{fig.run_times} shows the log mean running times (in milliseconds) in the different settings for increasing sample sizes. We generally observe that the DP approaches have longer running times, but while the running times for the alternatives generally increase with sample size $n$, we see that the FIMA running time does not appear to increase with sample size, indicating the scalability of the proposed approach. This is mainly due to the choice of $D \sim \text{Beta}(\nicefrac{1}{2}, \nicefrac{1}{2})$ to sample from the intervals, which allows us to directly sample FIMA solutions from the Beta distribution defined in Sec. \ref{sec.fima}.

\begin{figure}[H]
    \includegraphics[width=\textwidth]{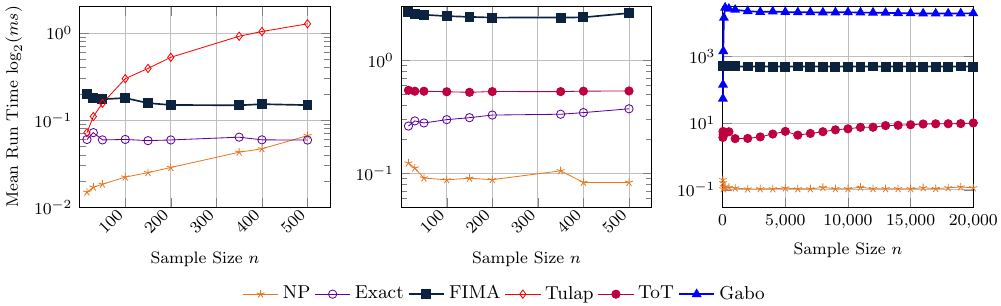}
\caption{Mean log running times (in milliseconds) for all considered methods on different samples sizes for one-proportion hypothesis test (left plot), two-sample hypothesis test (center plot) and $\chi^2$-test (right plot).}
\label{fig.run_times}
\end{figure}  

\section{Conclusion}

The FIMA provides an efficient solution for statistical inference under DP constraints for categorical data characterized by class probabilities $\bm{\theta}$. While providing validity guarantees in terms of its bootstrap distributions, overall it can be observed that it provides a flexible alternative to existing DP approaches in these settings (without the need to specify additional hyper-parameters). Indeed, we cannot claim that the FIMA is the best overall approach based on our experiments, but it generally provides a computationally efficient alternative with stable performance in terms of type-I error and power across a variety of inferential tasks in categorical data settings (Appendix \ref{sec.appendix} also shows some results for logistic models, as well as applications of the FIMA to real-world data from the CDC and ACS). In particular, it can be observed that empirically it has a performance comparable to the UMP test of \cite{c9} while in the other settings it provides a generally safe choice (based solely on our experiments) since it produces good type-I error and is in line with (or better than) the power of some existing DP alternatives. Moreover, based on our theoretical results, different inferential tasks can be performed using FIMA whenever their statistics are a continuously differentiable function of the underlying parameter $\bm{\theta}$, which allows the method to be easily adapted to other categorical data settings.

\section*{Acknowledgements}
In preparing this work, ChatGPT 4o was used for to outline and proofread for some paragraphs. The authors subsequently carefully reviewed and edited the content to ensure accuracy and coherence. They take full responsibility for the integrity of the publication. This work has been funded by NSF-SES 2150615.

\newpage
\bibliographystyle{plainnat}
\bibliography{ref.bib}

\newpage

\section{Appendix}
\label{sec.appendix}

\subsection{Additional Experimental Results}
\label{app.experiments}

We explore a few more experimental settings in addition to those in Sec. \ref{sec.experiments}, in particular we investigate the behavior of the different methods with larger and varying sample sizes. We first study the empirical performance of the approaches for the one-sample proportion case to understand how CI coverage (at level 0.95) and lengths behave with increasing sample sizes. We look at sample sizes in the following range: $n = \left\{ 8 \times 2^k : k = 0, \dotsc, 6 \right\} \,\cup\, \{700, 900, 1200\}$. The results of our experiments are shown in Fig. \ref{fig.one_sample_vary_n} where the left plot shows coverage and the right one shows lengths.

\begin{figure}[H]
    \includegraphics{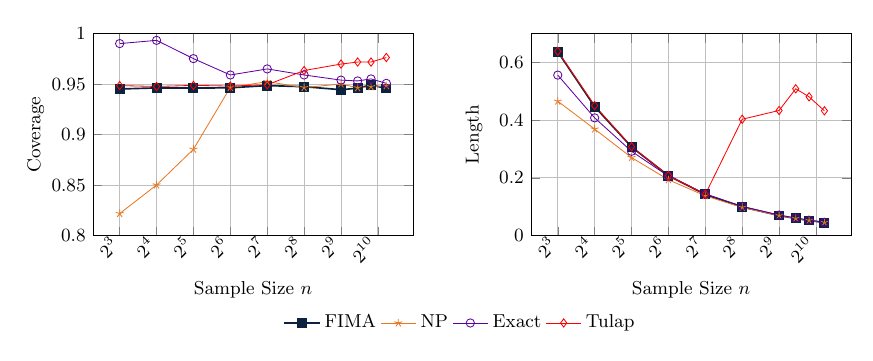}
\caption{One-sample CI coverage (left) and length (right) with different sample sizes}
\label{fig.one_sample_vary_n}
\end{figure}

We can see how the FIMA shows a consistent and good coverage across all sample sizes, while the non-private CIs converge to the nominal level (from above and below respectively) as the sample size increases. We can see however that, as the sample size increases, at a certain point the Tulap starts to over-cover. The latter corresponds to the point in the right plot where the Tulap CI lengths start to increase as well. This behavior of the Tulap, as for the ToT in other experiments, could be due to numerical issues in their implementation and should therefore not be considered as conclusive on the validity of these alternatives. This being said, in these experimental settings the FIMA CI lengths instead do indeed converge to the lengths of the non-private CIs as the sample size increases while maintaining good coverage.

We also investigate the behaviour of the approaches considered for the two-sample proportion setting with a larger sample size $n_1 = n_2 = 100$ (instead of $30$). The results for level and power of the same test as in Sec. \ref{sec.two_sample} are presented in Fig.  \ref{fig:two_prop_hyp_n100}.

\begin{figure}[H]
    \includegraphics{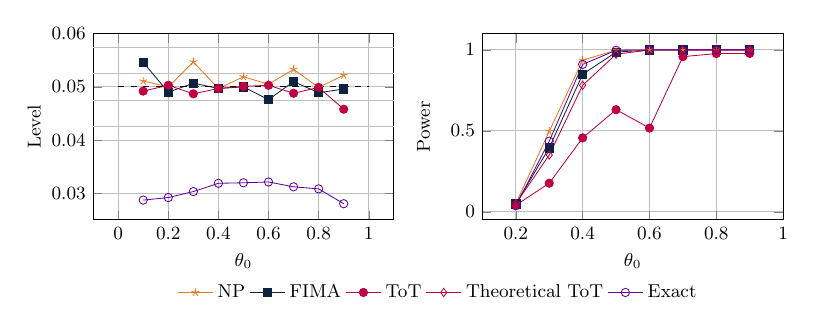}
\caption{Level (left plot) and power (right plot) evaluation of different test procedures under a one-sided hypothesis test for two proportions with  $n = 100$.}
\label{fig:two_prop_hyp_n100}
\end{figure}

It can be observed how all tests improve their level (left plot), especially FIMA and NP (non-private) while the ToT remains as good as in the smaller sample case. However, the right plot shows that the power of the FIMA improves and gets closer to the NP power compared to the theoretical ToT. The empirical ToT can be seen to greatly improve with respect to the small sample setting of $n = 30$ (see Sec. \ref{sec.two_sample}), although it still has lower power than what it should achieve theoretically (it would therefore probably require a better calibration of its hyper-parameters for these problems).

We finally look at the behavior of hypothesis testing with increasing sample sizes (compared to the results when varying parameter values). Specifically we study the performance of the different approaches considered for one- and two-sample proportion tests. In the one-sample case, we fix $\theta_0 = 0.95$ and test the hypotheses $H_0: \theta_0 \leq 0.9$ versus $H_A: \theta_0 > 0.9$, while for the two-sample setting we fix $\theta_1 = 0.8$ and $\theta_2 = 0.9$ and test the hypotheses $H_0: \theta_1 \geq \theta_2$ versus $H_A: \theta_1 < \theta_2$.

\begin{figure}[H]
    \includegraphics{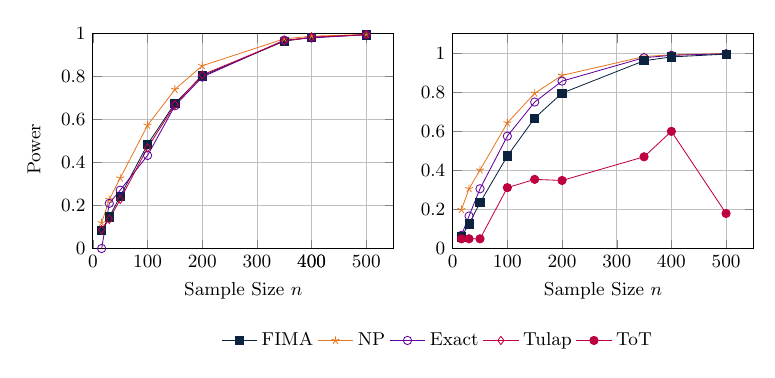}
\caption{Power with increasing sample sizes for one-sample proportion test (left plot) and two-sample proportion test (right plot) of different test procedures under a one-sided hypothesis test.}
\label{fig:two_prop_hyp_n100}
\end{figure}

\subsection{Logistic Models with Categorical Predictors}
\label{app:logistic}

We run experiments when using FIMA for inference on the parameters of logistic regression models with categorical predictors. In particular, following \cite{agresti2013categorical} we know that for a multinomial model with $K$ classes and choosing the last as the reference class, then we have the following relationships:
\begin{align*}
    &\text{class 1} \qquad  \qquad\text{logit}(\theta_1) = \beta_0 + \beta_1\\
    &\dots \qquad \dots \qquad \dots\\
    &\text{class} ~ K-1 \qquad \text{logit}(\theta_{K-1}) = \beta_0 + \beta_{K-1}\\
     &\text{class} ~ K \qquad  \qquad \text{logit}(\theta_{K}) = \beta_0,
\end{align*}
where $\beta_k$ is the logistic model coefficient for the $k^{\text{th}}$ class. As a consequence we have that:
\begin{equation}
\label{eq.logit_beta}
    \beta_0 = \text{logit} (\theta_K) \quad \mbox{and} \quad \beta_k = \text{logit} (\theta_k) - \text{logit} (\theta_K),
\end{equation}
which allows us to make use of a plug-in approach replacing the FIMA solution $\hat{\theta}_k$ in the above equations to obtain FIMA solutions $\hat{\beta}_k$. Indeed, the logit function is continuously differentiable therefore respecting Assumption \ref{asm.diff_stat}. To the best of our knowledge, there are no currently existing DP methods that can directly be applied to this setting and therefore limit ourselves to showing the results for FIMA. As for the $\chi^2$-test, considering the larger number of parameters (counts) in this setting, we set $H = 10^4$ increasing it to better approximate the FIMA distribution in these settings.

\subsubsection{One Binary Predictor}

Consider the simple logistic model regression model for binary response $X$ and with one binary predictor $T$ such that the model can be written as
$$\text{logit}(\theta_{k_T}) = \beta_0 + \beta_1\,T,$$
where $\theta_{k_1} := \mathbb{P}(X=1 \,|\, T = 1)$ and $\theta_{k_0} := \mathbb{P}(X=1 \,|\, T = 0)$. Therefore, we apply Algorithm \ref{algo_ifb} to obtain FIMA solutions $\hat{\theta}_{k_1}$ and $\hat{\theta}_{k_0}$ respectively and plug these into the formula in \eqref{eq.logit_beta}. We keep the same parameters of our previous experiments and fix $\alpha=0.05$, $\epsilon = 1$ and $B = 10^4$. Table \ref{tab:logit_one_pred} shows the CI coverage and length for the parameters $\beta_0$ and $\beta_1$.

\begin{table}[H]

\begin{tabular}{c cc cc}
\toprule
$n$ & \multicolumn{2}{c}{$\beta_0$} & \multicolumn{2}{c}{$\beta_1$} \\
     & Coverage & Length & Coverage & Length \\
\midrule
15   & 0.9518 & 5.6602 & 0.9761 & 9.4099 \\
30   & 0.9493 & 3.9519 & 0.9724 & 6.9174 \\
100  & 0.9503 & 1.1586 & 0.9693 & 3.8286 \\
200  & 0.9495 & 0.8799 & 0.9464 & 2.0628 \\
500  & 0.9499 & 0.5377 & 0.9483 & 1.1215 \\
1000 & 0.9547 & 0.3696 & 0.9496 & 0.7670 \\
2000 & 0.9491 & 0.2607 & 0.9453 & 0.5299 \\
\bottomrule
\end{tabular}
\vspace{1em}
\centering
\caption{CI coverage and lengths of the FIMA for logistic regression with one binary predictor for different sample sizes.}
\label{tab:logit_one_pred}
\end{table}

We can see that for all sample sizes the coverage is close to the nominal level of 0.95, or at least approaches it from the conservative side in the case of $\beta_1$.

\subsection{Two Binary Predictors}

We apply the same FIMA mechanism for the case of two binary predictors, $T_1$ and $T_2$ such that the model is defined as:
$$\text{logit}(\theta_{k_\mathbf{T}}) = \beta_0 + \beta_1\,T_1 + \beta_2 \, T_2,$$
with $\mathbf{T}:= [T_1 \, T_2]$. The results for this setting are presented in Table \ref{tab:logit_two_pred}.

\begin{table}[H]
\centering
\begin{tabular}{c cc cc cc}
\toprule
$n$ & \multicolumn{2}{c}{$\beta_0$} & \multicolumn{2}{c}{$\beta_1$} & \multicolumn{2}{c}{$\beta_2$} \\
    & Coverage & Length & Coverage & Length & Coverage & Length \\
\midrule
15   & 0.9867 & 9.52  & 0.9858 & 13.94 & 0.9885 & 13.36 \\
30   & 0.9504 & 4.68  & 0.9654 & 10.05 & 0.9766 & 9.94  \\
100  & 0.9491 & 1.51  & 0.9747 & 4.93  & 0.9629 & 4.50  \\
200  & 0.9495 & 1.24  & 0.9592 & 3.42  & 0.9481 & 2.22  \\
500  & 0.9525 & 0.77  & 0.9472 & 1.70  & 0.9540 & 1.31  \\
1000 & 0.9513 & 0.53  & 0.9452 & 1.19  & 0.9601 & 0.86  \\
2000 & 0.9512 & 0.37  & 0.9413 & 0.77  & 0.9583 & 0.62  \\
\bottomrule
\end{tabular}
\vspace{1em}
\centering
\caption{CI coverage and lengths of the FIMA for logistic regression with two binary predictors for different sample sizes.}
\label{tab:logit_two_pred}
\end{table}

Also in this case, coverage and lengths for the $\beta_i$ parameters appear to be good with CI lengths decreasing with sample size.

\subsection{Applications}

We apply the FIMA to real-world datasets, validating its practical utility. First, we examine the HIV infection data from the United States Centers for Disease Control and Prevention (CDC) for 2022\footnote{\url{https://www.cdc.gov/hiv/data-research/facts-stats/index.html}}. Publicly reporting these data is crucial for guiding interventions, yet individuals represented in summary statistics may be re-identified, particularly in small or rural populations. We apply the two-sample test of proportions with FIMA to compare HIV infection rates between sub-populations in this dataset, while ensuring individual privacy.  In each application, we compare the results of FIMA with those obtained from a standard non-private (public) approach.


\subsubsection{HIV Diagnoses in the US}
In 2022, the CDC reported 37,981 new HIV diagnoses among individuals aged 13 and above in the U.S., with 70\% (26,749) among gay, bisexual, and other men reporting male-to-male sexual contact. Among these, 9,374 are Hispanic/Latino and 8,831 are Black/African American. For targeted interventions, it is essential to assess which of these groups is more affected by HIV. We conduct a private test to determine if the proportion of HIV-positive Hispanic/Latino individuals reporting male-to-male sexual contact ($\theta_1$) significantly exceeds that of their Black/African American counterparts ($\theta_2$), leading to the set of hypotheses $H_0: \theta_1 \leq \theta_2$ and $H_A: \theta_1 > \theta_2$.

We applied the Laplace mechanism to privatize observed proportions with a budget allocation of \(\nicefrac{\epsilon}{2}\). These privatized proportions, the sample size (\(n = 37,981\)), \(\epsilon\), and \(H\) were inputs for our FIMA two-sample test. With \(\epsilon = 0.1\) and \(H = 10^3\), we obtained \(p\)-value \(= 0\). Results from the non-private asymptotic test (\(p\)-value \(= 9.5 \times 10^{-7}\)) and Fisher's exact binomial test (\(p\)-value \(= 9.9 \times 10^{-7}\)) aligned with our findings. To evaluate consistency, we repeated the private test $10^4$ times with different random seeds, observing a minimum \(p\)-value of \(0\) and a maximum of \(0.0062\). We further analyzed the effect of \(\epsilon\) on test outcomes. Table \ref{ep_pvalue} shows the \(p\)-values for different privacy budgets, \(\epsilon\), and \(H = 10^3\), demonstrating that conclusions align with non-private tests even for small \(\epsilon\) values, down to \(\epsilon = 0.01\).

\begin{table}[h!]
\centering
\resizebox{0.9\textwidth}{!}{%
\begin{tabular}{ccccccccc}
\toprule
\(\epsilon\) & 0.001 & 0.01 & 0.1 & 0.5 & 1 & 3 & 5 & 10 \\
\midrule

p-value & 0.7634 & 0.0006 & 0.0000 & 0.0000 & 0.0000 & 0.0000 & 0.0000 & 0.0000\\
\bottomrule
\end{tabular}%
}
\vspace{1em}
\caption{Values of \(\epsilon\) and corresponding FIMA p-values of two-sample proportion tests on HIV infections data for Hispanic/Latino origin and Black/African American individuals who reported male-to-male sexual contact.}
\label{ep_pvalue}
\end{table}

\subsubsection{HIV Diagnoses}
The CDC aims to reduce new HIV infections to 9,300 by 2025 and 3,000 by 2030. Effective interventions are required at multiple levels, especially for highly affected groups, such as those reporting male-to-male sexual contact. In Alabama, the Office of HIV Prevention and Care reported 374 new HIV cases in the third quarter of 2024, with 232 involving male-to-male sexual contact\footnote{\url{https://www.alabamapublichealth.gov/hiv/statistics.html}}. To evaluate progress toward the CDC’s goals, we tested if the HIV infection rate in this subpopulation in Alabama is significantly lower than the 2022 national rate (0.7).

Using the FIMA one-sample test of proportions we tested \( H_0: \theta_0 \geq 0.7 \) versus \( H_1: \theta_0 < 0.7 \) with a privacy budget of \( \epsilon = 1 \) and \( H = 10^4 \). This yielded a \( p \)-value of 0.0002, compared to 0.0004 from the non-private test. A 95\% differentially private confidence interval for the HIV infection rate among those reporting male-to-male sexual contact in Alabama is (0.5652, 0.6637), closely aligning with the non-private interval of (0.5711, 0.6695). Our results therefore indicate substantial progress toward the CDC’s target for reducing HIV diagnoses, with the private and non-private tests producing consistent conclusions.

\subsubsection{Poverty Status and Race in the US}

We use 2022 American Community Survey (ACS) data\footnote{See URL in Appendix.} to test for dependence between race and poverty status in the US. Our FIMA $\chi^2$-test of independence (see Sec. \ref{sec.chi_square}) ensures that this analysis can be conducted without compromising individual privacy. Testing this hypothesis on national data with a privacy budget of \( \epsilon = 0.00001 \) and \( H = 10^3 \) yielded a \( p \)-value of \( 0 \), consistent with the non-private test. Table \ref{chisquare_table} displays \( p \)-values for the FIMA $\chi^2$-test across different privacy budgets \( \epsilon \). With \( \epsilon = 0.00001 \), we repeated the private test $10^4$ times, obtaining a minimum \( p \)-value of \( 0 \) and a maximum of \( 0.004 \).

\begin{table}[h!]
\centering
\resizebox{\textwidth}{!}{%
\begin{tabular}{cccccccccccc}
\toprule
\(\epsilon\) & 0.000001 & 0.00001 & 0.0001 & 0.001 & 0.01 & 0.1 & 0.5 & 1 & 3 & 5 & 10 \\
\midrule
p-value & 0.532 & 0.000 & 0.000 & 0.000 & 0.000 & 0.000 & 0.000 & 0.000 & 0.000 & 0.000 & 0.000 \\
\bottomrule
\end{tabular}%
}
\vspace{1em}
\caption{$\chi^2$-test \( p \)-value for different privacy budget values (\(\epsilon\)).}
\label{chisquare_table}
\end{table}

\vspace{0.1cm}
\noindent We extended this analysis to state and zipcode levels, using data from Alabama and the zipcode 36830. For Alabama, the non-private test returned a \( p \)-value of \( 0 \); however, the zipcode-level data had cells with zero counts, preventing the non-private test from running. With \( \epsilon = 0.001 \) and \( H = 1000 \), our private test yielded \( p \)-values of \( 0 \) for both state and zipcode levels. Thus, the FIMA $\chi^2$-test aligns with the non-private test where applicable and remains robust even when the non-private test is infeasible.

\end{document}